\definecolor{MyGreen1}{RGB}{20,180,40}
\definecolor{MyBlue1}{RGB}{00,150,255}
\definecolor{MyGray1}{RGB}{200,200,200}
\tikzset{
	BC/.style = {decorate,  
		decoration={calligraphic brace, amplitude=1.2mm,
			raise=1.2mm, mirror},
		thick, pen colour={black}
	},
}
\tikzset{
	BC2/.style = {decorate,  
		decoration={calligraphic brace, amplitude=2mm,
			raise=1.5mm, mirror},
		thick, pen colour={black}
	},
}
\tikzset{
	block/.style    = {draw, thick, rectangle, text centered, align=center,  minimum width = 2em},
	sblock/.style      = {draw, thick, rectangle, minimum height = 2em,
		minimum width = 2em}, 
}
\tikzstyle{block} = [rectangle, draw,  text centered]
\newcolumntype{M}[1]{>{\centering\arraybackslash}m{#1}}
\newcolumntype{N}{@{}m{0pt}@{}}
\newtheorem{theorem}{{Theorem}}
\newtheorem{lemma}[theorem]{{Lemma}}
\DeclareMathAlphabet{\mathbfsl}{OT1}{ppl}{b}{it} 
\newcommand{\be}[1]{\begin{equation}\label{#1}}
\newcommand{\ee}{\end{equation}}
\renewcommand{\leq}{\leqslant}
\newcommand{\Lref}[1]{Lem\-ma\,\ref{#1}}
\newcommand{\Cref}[1]{Co\-ro\-lla\-ry\,\ref{#1}}
\begin{document}
\title{Low-Complexity Decoding of a Class of Reed-Muller Subcodes for Low-Capacity Channels}  
\author{ \IEEEauthorblockN{Mohammad Vahid Jamali$^{\ast}$, Mohammad Fereydounian$^{\dagger}$, Hessam Mahdavifar$^{\ast}$, and Hamed Hassani$^{\dagger}$}
\IEEEauthorblockA{$^{\ast}$Department of Electrical Engineering and Computer Science, University of Michigan, \{mvjamali,hessam\}@umich.edu\\
	$^{\dagger}$Department of Electrical and Systems Engineering, University of Pennsylvania, \{mferey,hassani\}@seas.upenn.edu}
}	
\maketitle
\begin{abstract}
We present a low-complexity and low-latency decoding algorithm for a class of Reed-Muller (RM) subcodes that are defined based on the product of smaller RM codes. More specifically, the input sequence is shaped as a multi-dimensional array, and the encoding over each dimension is done separately via a smaller RM encoder. Similarly, the decoding is performed over each dimension via a low-complexity decoder for smaller RM codes.
The proposed construction is of particular interest to low-capacity channels that are relevant to emerging low-rate communication scenarios. We present an efficient soft-input soft-output (SISO) iterative decoding algorithm for the product of RM codes and demonstrate its superiority compared to hard decoding over RM code components. The proposed coding scheme has decoding (as well as encoding) complexity of $\mathcal{O}(n\log n)$ and latency of $\mathcal{O}(\log n)$ for blocklength $n$. This research renders a general framework toward efficient decoding of RM codes. 

\end{abstract}
 \thispagestyle{empty}
\section{Introduction}\label{intro}
In recent years, there has been significant renewed interest in exploring Reed-Muller (RM) codes, which are one of the oldest families of error-correcting codes \cite{reed1954class,muller1954application}. RM codes are closely connected to polar codes \cite{arikan2009channel} in the sense that the generator matrices of both codes are obtained by selecting rows from a same matrix, though by different selection rules. In contract to polar codes, which have channel-specific construction, RM codes have a universal encoding scheme.
It is also conjectured that RM codes have similar characteristics to random codes in terms of weight enumeration \cite{kaufman2012weight} and scaling laws \cite{hassani2018almost}.
 While it was proved earlier that RM codes achieve the Shannon capacity of binary erasure channels (BECs) \cite{kudekar2017reed}, and that of binary symmetric channels (BSCs) at extreme rates converging to zero or one \cite{abbe2015reed},  Reeves and Pfister have shown very recently that RM codes are able to achieve the capacity of general binary-input memoryless symmetric (BMS) channels \cite{reeves2021reed}.

Although RM codes have shown excellent performance under maximum likelihood (ML) decoding, they still lack efficient decoding algorithms for general code parameters.
To this end, Dumer’s recursive list decoding algorithm \cite{dumer2006soft}
 provides a complexity-performance trade-off by achieving close-to-ML decoding performance for large enough, e.g., exponential in blocklength, list sizes. 
 Recently, a recursive projection-aggregation (RPA) algorithm was proposed in \cite{ye2020recursive} for decoding RM codes. Despite its explicit structure and excellent decoding performance, the RPA algorithm (in its general form) requires a complexity of $\mathcal{O}(n^r\log n)$ for an RM code of length $n$ and order $r$. Building upon the projection pruning idea in \cite{ye2020recursive}, there has been some recent attempts at reducing the complexity of the RPA algorithm \cite{jamali2021Reed,fathollahi2020sparse}, and also applying it in other contexts than communication \cite{soleymani2021coded}. Moreover, building upon the computational tree of RM (and polar) codes, a class of neural encoders and decoders has been proposed in \cite{makkuva2021ko} via deep learning methods.
 
In this paper, our goal is to devise an efficient, low-complexity, and low-latency coding scheme for low-capacity channels \cite{fereydounian2019channel,jamali2021massive,jamali2018low,dumer2020codes,dumer2021codes},
that are relevant to emerging low-rate communication scenarios, such as narrowband Internet-of-Things (NB-IoT) \cite{ratasuk2016overview}, deep-space communication, and covert (millimeter-wave) communication \cite{jamali2021covert}, among others. Users in these applications typically experience very low signal-to-noise ratios (SNRs). Consequently, reliable communication in such applications requires very large blocklengths, and challenges such as ensuring low latency/complexity and high reliability become more apparent.  
The current practical approaches for these scenarios are mainly based on large repetitions of a powerful moderate-rate code.
While such a construction, i.e., concatenation of a repetition code and a moderate-rate code, results in low-latency codes, it has been shown in \cite{fereydounian2019channel} that the error performance can be significantly degraded 
as a result of 
repetitions. Therefore, using more principled coding schemes  to design low-rate codes can potentially lead to significantly more powerful codes. We will employ the recent advances in RM codes as well as product codes to design efficient coding schemes that achieve  better performance while maintaining  low complexity and low latency. Consequently, our proposed schemes are also of particular application to ultra-reliable and low-latency communications (URLLC).

We build upon product codes \cite{elias1954error} to construct a larger RM code based on the product of smaller RM code components. It is well known that building larger codes upon product codes renders several advantages, such as low encoding and decoding complexity, large minimum distances, and a highly parallelized implementation \cite{elias1954error,pyndiah1998near,mukhtar2016turbo}, and it has very recently been shown that it also enables training neural encoders and decoders for relatively large channel codes \cite{jamali2021productae}.

 \begin{figure*}[t]
	\centering
	\resizebox{0.9\linewidth}{!}{
		\begin{tikzpicture}
		\node at (0.75,0.35) (U) {$\mathbf{U}=\begin{bmatrix} u_{ij} \end{bmatrix}_{k_2\times k_1}$};
		\node at (0.75,0) (P1) {};
		
		
		\coordinate (dm11) at (3,0);
		\coordinate (dm12) at (3,0);
		\node[block,  thick, minimum width=1.5cm, minimum height=1.25cm, align=center, fill=orange!25, draw] [fit = (dm11) (dm12)] (E1) {};
		\node[align=center] at (E1.center) {$\mathcal{E}_1$ \\ $(k_1,n_1)$};
		
		\node at (4.42,0.35) (U1) {$\mathbf{U}^{(1)}_{k_2\times n_1}$};
		
		\coordinate (dm21) at (6,0);
		\coordinate (dm22) at (6,0);
		\node[block, thick,minimum width=1.5cm, minimum height=1.25cm, align=center, fill=orange!25, draw] [fit = (dm21) (dm22)] (E2) {};
		\node[align=center] at (E2.center) {$\mathcal{E}_2$ \\ $(k_2,n_2)$};
		
		\coordinate (dm31) at (9.5,0);
		\coordinate (dm32) at (9.5,0);
		\node[block, thick, minimum width=2.5cm, minimum height=1.25cm, align=center, fill=red!25, draw] [fit = (dm31) (dm32)] (Ch) {};
		\node[align=center] at (Ch.center) {Channel};
		\node at (7.44,0.35) (U2) {$\mathbf{U}^{(2)}_{n_2\times n_1}$};

		\coordinate (dm41) at (12.85,0);
		\coordinate (dm42) at (12.85,0);
		\node[block, thick, minimum width=1.25cm, minimum height=1.25cm, align=center, fill=MyBlue1!25, draw] [fit = (dm41) (dm42)] (D2) {};
		\node[align=center] at (D2.center) {$\mathcal{D}_1$};
		\node at (11.44,0.3) (Y) {$\mathbf{Y}_{n_2\times n_1}$};

		\coordinate (dm51) at (15.6,0);
		\coordinate (dm52) at (15.6,0);
		\node[block,  thick, minimum width=1.25cm, minimum height=1.25cm, align=center, fill=MyBlue1!25, draw] [fit = (dm51) (dm52)] (D1) {};
		\node[align=center] at (D1.center) {$\mathcal{D}_2$};
		\node at (14.15,0.35) (Y2) {$\mathbf{Y}^{(1)}_{n_2\times n_1}$};
		
		\node at (16.9,0.35) (Y1) {$\mathbf{Y}^{(2)}_{n_2\times n_1}$};	
		\node at (17.75,0) (P2) {};

		\draw [->, thick] (P1)--(E1);
		\draw [->, thick] (E1)--(E2);
		\draw [->, thick] (E2)--(Ch);
		\draw [->, thick] (Ch)--(D2);
		\draw [->, thick] (D2)--(D1);
		\draw [->, thick] (D1)--(P2);

		
		\node at (1.15,-0.5) (n1) {};
		\node at (2.25,-0.5) (n2) {};
		\node at (1.15,-0.6) (n3) {};
		\node at (2.25,-0.6) (n4) {};
		\node at (1.15,-0.7) (n5) {};
		\node at (2.25,-0.7) (n6) {};
		\draw [-,densely dashed,thick] (n1)--(n2);
		\draw [-,densely dashed,thick] (n3)--(n4);
		\draw [-,densely dashed,thick] (n5)--(n6);

		\node at (11.15,-0.5) (n7) {};
		\node at (12.25,-0.5) (n8) {};
		\node at (11.15,-0.6) (n9) {};
		\node at (12.25,-0.6) (n10) {};
		\node at (11.15,-0.7) (n11) {};
		\node at (12.25,-0.7) (n12) {};
		\draw [-,densely dashed,thick] (n7)--(n8);
		\draw [-,densely dashed,thick] (n9)--(n10);
		\draw [-,densely dashed,thick] (n11)--(n12);
		
		\node at (4.8,-0.1) (c1) {};
		\node at (4.8,-1.2) (c2) {};
		\draw [-,densely dashed,thick] (c1)--(c2);
		\node at (4.9,-0.1) (c3) {};
		\node at (4.9,-1.2) (c4) {};
		\draw [-,densely dashed,thick] (c3)--(c4);
		\node at (5,-0.1) (c5) {};
		\node at (5,-1.2) (c6) {};
		\draw [-,densely dashed,thick] (c5)--(c6);
		
		\node at (14.55,-0.1) (c7) {};
		\node at (14.55,-1.2) (c8) {};
		\draw [-,densely dashed,thick] (c7)--(c8);
		\node at (14.65,-0.1) (c9) {};
		\node at (14.65,-1.2) (c10) {};
		\draw [-,densely dashed,thick] (c9)--(c10);
		\node at (14.75,-0.1) (c11) {};
		\node at (14.75,-1.2) (c12) {};
		\draw [-,densely dashed,thick] (c11)--(c12);
		
		\end{tikzpicture}
	}
	\caption{
		Demonstration of two-dimensional (2D) product codes. Each $q$-th encoder $\mathcal{E}_q$ and decoder $\mathcal{D}_q$, $q=1,2$, performs encodings and decodings over the $q$-th dimension of the 2D input arrays.}
	\label{fig1}
\end{figure*}
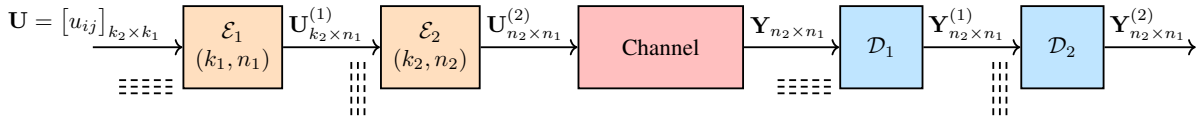
 
 While the framework in this paper is applicable to any RM code components, we particularly consider first-order RM codes as the components to take advantage of their ML performance with an $\mathcal{O}(n\log n)$ complexity, enabled by the fast Hadamard transform (FHT) \cite{ye2020recursive}. The resulting code will be a subcode of an order-$Q$ RM code, when considering $Q$ component codes in the product; thus, it can be a low-rate code depending on the blocklength of individual code components. We present an efficient soft-input soft-output (SISO) iterative decoding algorithm, enabled by our soft-FHT algorithm over code components.

 We show that our decoder maintains a low complexity of $\mathcal{O}(n\log n)$ and a low latency of $\mathcal{O}(\log n)$, regardless of the value of $Q$. Moreover, our numerical results demonstrate the superiority of the proposed SISO decoder compared to hard decoding over RM code components as well as RPA-like decoding of RM subcodes \cite{jamali2021Reed}. We also demonstrate meaningful gains compared to conventional designs such as Turbo-repetition. 
 Lastly, we remark that the proposed methods in this paper can lead to a general framework toward low-complexity decoding of RM codes. 
 

\section{Preliminaries and Setting}\label{Prelim}
\subsection{RM Codes}\label{RM_rev}
An RM code is defined in terms of two parameters: $(i)$ a positive integer $m$ that defines the blocklength as $n=2^m$; and $(ii)$ a nonnegative integer $r\in\{0,1,\cdots, m\}$, named the order of the RM code, that defines the code dimension $k$ as $k=\sum_{i=0}^r \binom{m}{i}$. There are several ways, including the algebraic formulations in \cite{ye2020recursive}, to describe an RM code of length $n=2^m$ and order $r$, denoted by $\mathcal{RM}(m,r)$. One simple description is through the so-called polarization matrix. Indeed, the generator matrix of an $\mathcal{RM}(m,r)$ code, denoted by $\mathbf{G}_{k\times n}$, can be obtained by choosing rows of the following matrix that have a Hamming weight of at least $2^{m-r}$:
\begin{align}\label{Pnn}
\mathbf{P}_{n\times n}=\begin{bmatrix}
1 & 0\\ 1&1
\end{bmatrix}^{\otimes m},
\end{align}
where $\mathbf{F}^{\otimes m}$ is the $m$-th Kronecker power of a matrix $\mathbf{F}$. The resulting generator matrix $\mathbf{G}_{k\times n}$ can then be partitioned into sub-matrices as
\begin{align}\label{Gkn}
\mathbf{G}_{k\times n}=\begin{bmatrix}
\mathbf{G}_{0}\\
\mathbf{G}_{1}\\
\vdots\\
\mathbf{G}_{r}
\end{bmatrix},
\end{align}
where $\mathbf{G}_{0}$ is a length-$n$ all-one row vector, and $\mathbf{G}_{1}$ is an $m\times n$ matrix that lists all the $n=2^m$ unique length-$m$ binary vectors $\{0,1\}^m$ as the columns. Moreover, $\mathbf{G}_{i}$, for $1\leq i\leq r$, is an $\binom{m}{i}\times n$ matrix whose each row is obtained by the element-wise product of a distinct set of $i$ rows from $\mathbf{G}_{1}$ \cite{salomon2005augmented}. Accordingly, $\mathbf{G}_{k\times n}$ has exactly $\binom{m}{i}$ rows with the Hamming weight $n/2^i$, for $0\leq i\leq r$.

\subsection{Product Codes}\label{PCs}
Fig. \ref{fig1} illustrates the encoding and decoding procedure for two-dimensional (2D) product codes. 
Assuming two code components $\mathcal{C}_1:(k_1,n_1)$ and $\mathcal{C}_2:(k_2,n_2)$, their product code is constructed by first forming the length-$k_1 k_2$ information sequence as a $k_2\times k_1$ matrix, and then encoding each row using the first encoder $\mathcal{E}_1$ and each column using the second encoder $\mathcal{E}_2$. It can be shown that in the resulting encoded matrix of size $n_2\times n_1$ (that can be reshaped to a length-$n_1n_2$ vector as a codeword), each row is a codeword of $\mathcal{C}_1$ and each column is a codeword of $\mathcal{C}_2$. Therefore, after properly reshaping the noisy codewords at the receiver, the first decoder $\mathcal{D}_1$ decodes the rows of the received 2D array and the second decoder $\mathcal{D}_2$ decodes the columns of its input array. Note that the order of decoders as well as encoders can be interchanged given the symmetry of the problem. 

In general, a $Q$-dimensional product code $\mathcal{C}$ can be constructed by iterating $Q$ codes $\mathcal{C}_1,\mathcal{C}_2,\cdots,\mathcal{C}_Q$. More specifically, each $q$-th encoder, $q=1,\cdots Q$, encodes the vectors in the $q$-th dimension of the $Q$-dimensional input array.
Similarly, after properly reshaping the noisy codewords at the receiver, each $q$-th decoder decodes the noisy vectors on the $q$-th dimension of the incoming array. Then, assuming $\mathcal{C}_q:(k_q,n_q,d_q,R_q)$ with the generator matrix $\mathbf{G}^{(q)}$, where $d$ and $R$ stand for the minimum distance and rate, respectively, the parameters of the resulting product code $\mathcal{C}$ can be obtained as the product of the parameters of the component codes, i.e.,
\begin{align}
p&=\prod_{q=1}^{Q}p_q,\hspace{1cm} p\in\{k,n,d,R\},\label{P}\\
\mathbf{G}&=\mathbf{G}^{(1)}\otimes \mathbf{G}^{(2)}\otimes\cdots\otimes\mathbf{G}^{(Q)}.\label{G}
\end{align}
It is known that applying a few decoding iterations (together with SISO decoding) usually  improves the decoding performance of product codes \cite{pyndiah1998near}. Therefore, often a few, say $I$, iterations will be applied at the decoder of product codes.

In the special case of RM component codes, the resulting product code is a subcode of a larger RM code, i.e., \cite[Corollary 2]{salomon2005augmented}
\begin{align}\label{subcode}
\mathcal{RM}(m_1,r_1)\otimes\mathcal{RM}(m_2,r_2)\otimes\cdots\otimes\mathcal{RM}(m_Q,r_Q)\nonumber\\
\subseteq \mathcal{RM}\bigg(\sum_{q=1}^{Q}m_q,\sum_{q=1}^{Q}r_q\bigg).
\end{align}
Note, based on \eqref{P}, that the resulting product code has a blocklength of $n_t:=2^{m_t}$, where $m_t:=\sum_{q=1}^{Q}m_q$, that is the same as the blocklength of the larger code in the right-hand side (RHS) of \eqref{subcode}. Also, given that an $\mathcal{RM}(m,r)$ code has a minimum distance of $d=2^{m-r}$, one can observe that both the resulting product code and the code in the RHS of \eqref{subcode} have the same minimum distance $d_t:=2^{m_t-r_t}$, where $r_t:=\sum_{q=1}^{Q}r_q$. However, the resulting product code has a smaller dimension than the larger RM code, i.e.,
\begin{align}\label{k}
\prod_{q=1}^Q\left[\sum_{i_l=0}^{r_q} \binom{m_q}{i_l}\right]\leq \sum_{i_t=0}^{r_t} \binom{m_t}{i_t}.
\end{align}
\subsection{Problem Setting}\label{setting}
In this paper, we consider binary phase-shift keying (BPSK) modulation and transmission over additive white Gaussian noise (AWGN) channels. More specifically, we first map each codeword ${\mathbf{c}}$ to $\tilde{\mathbf{c}}:=1-2{\mathbf{c}}$, before sending it through the channel. The received vector at the channel output is $\mathbf{y}=\tilde{\mathbf{c}}+\mathbf{n}$, where $\mathbf{n}$ is the noise vector whose elements are zero-mean Gaussian random variables with variance $\sigma^2$. In this case, the log-likelihood ratio (LLR) vector can be obtained from $\mathbf{y}$ as $\boldsymbol{l}=2\mathbf{y}/\sigma^2$. Throughout the paper, we define the SNR as ${\rm SNR}:=1/(2\sigma^2)$ and the energy-per-bit $E_b$ to the noise ratio as $E_b/N_0:={\rm SNR}/R=n/(2k\sigma^2)$.

\section{Proposed Scheme}\label{proposed}
\subsection{Encoding Scheme}\label{enc}
The general encoding procedure has been described in Section \ref{PCs}. In this paper, we focus on first-order RM code components with two major motivations. First, using \eqref{subcode}, the resulting product code is a subcode of an $\mathcal{RM}(m_t,Q)$ code, which is a low-rate code for large enough $m_t$'s (compared to $Q$). Therefore, it aligns with the general objective of the paper, which is to design an efficient, low-complexity, and low-latency coding scheme for emerging low-capacity channels. Second, we can take advantage of the low-complexity FHT decoder for order-$1$ RM codes, that achieves the same performance as an ML decoder but with an $\mathcal{O}(n\log n)$ complexity instead of an $\mathcal{O}(n^2)$ complexity. In fact, we establish in Section \ref{complexity} the possibility of decoding the product of any $Q$ first-order RM codes with $\mathcal{O}(n\log n)$ complexity and $\mathcal{O}(\log n)$ latency.

\subsection{Decoding Scheme}\label{dec}
It is not hard to show that the vectors on each $q$-th dimension of the encoded $Q$-dimensional array, at the output of the product encoder, are codewords of the $q$-th component code $\mathcal{C}_q$, even if systematic encoders are not used. Therefore, the vectors on each $q$-th dimension of the received multi-dimensional array, after carefully reshaping the received signal, can be viewed as the noisy codewords of $\mathcal{C}_q$. Accordingly, the decoding procedure can be summarized as Algorithm \ref{alg_dec}. The reshaping of length-$n_t$ vectors to $Q$-dimensional arrays and vice versa, performed in lines 2 and 8, respectively, need to be handled carefully with respect to the product encoder architecture (e.g., the parameter of the individual code components, order of the encoders, etc.). Additionally, in line 5, we considered a general decoder $\mathcal{D}_q$ for decoding the noisy codewords of $\mathcal{C}_q$ on the $q$-th dimension of the LLR array $\mathbf{L}$. In the case of order-$1$ RM codes, considered in this paper as the component codes, we apply a soft version of the FHT algorithm, developed in Section \ref{soft-fht}, to enable an efficient SISO decoding for the underlying product code.

\begin{algorithm}[t]
	\caption{Decoding of $Q$-Dimensional Product Codes}    \label{alg_dec}
	\textbf{Input:} Noisy codeword $\mathbf{y}$, noise variance $\sigma^2$, number of decoding iterations $I$\\
	\textbf{Output:} Decoded codeword $\hat{\mathbf{c}}$
	\vspace*{0.05in}
	\begin{algorithmic}[1]
		\State $\boldsymbol{l}\gets2\mathbf{y}/\sigma^2$ \Comment compute the LLR vector
		\State Properly reshape $\boldsymbol{l}$ to a $Q$-dimensional array $\mathbf{L}$
		\For {$i'=1,2,\cdots,I$}
		\For {$q=1,2,\cdots,Q$}
		\State $\mathbf{L}\gets\mathcal{D}_q(\mathbf{L},{\rm dim}=q)$ \Comment update the vectors on the $q$-th dimension of $\mathbf{L}$ after decoding them using $\mathcal{D}_q$
		\EndFor 
		\EndFor
		\State Properly reshape $\mathbf{L}$ to a length-$n_t$ vector $\hat{\boldsymbol{l}}$
		\State $\hat{\mathbf{c}}\gets0.5(1-{\rm sign}(\hat{\boldsymbol{l}}))$
		\State \textbf{return} $\hat{\mathbf{c}}$
	\end{algorithmic}
\end{algorithm}

\subsection{Soft-FHT Algorithm}\label{soft-fht}
Given $\boldsymbol{l} \in \mathbb{R}^n$ as the vector of channel LLRs, corresponding to the transmission of an $(n,k)$ code with codebook $\mathcal{C}$ over a general binary-input memoryless channel, the ML decoder picks a codeword $\mathbf{c}^{*}$ according to the following rule \cite{ye2020recursive}
\begin{align}\label{map}
\mathbf{c}^{*}=\operatorname*{argmax}_{\mathbf{c}\in\mathcal{C}}~~ \langle\boldsymbol{l},1-2{\mathbf{c}}\rangle,
\end{align}
where $\langle \cdot,\cdot\rangle$ denotes the inner-product of two vectors. A naive implementation of the ML decoder then requires an $\mathcal{O}(n2^k)$ complexity to compute  $2^k$ inner-products between length-$n$ vectors. In particular, for first-order RM codes, $\mathcal{RM}(m,1)$, that have $2^{m+1}=2n$ codewords, this is equivalent to an $\mathcal{O}(n^2)$ complexity and an $\mathcal{O}(n)$ latency (when computing all the inner-products in parallel).
However, one can do the ML decoding for order-$1$ codes in a more efficient way via the FHT algorithm. 
The high-level idea is that half of the $2n$ codewords of an ${\mathcal{RM}}(m,1)$ code (in $\pm 1$) are the columns of the standard $n\times n$ Hadamard matrix $\mathbf{H}$, and the other half are columns of $-\mathbf{H}$. Therefore, the ML decoder for order-$1$ RM codes boils down to the matrix multiplication of the LLR vector $\boldsymbol{l}$ and the Hadamard matrix $\mathbf{H}$, i.e., $\boldsymbol{l}_{\rm WH}:=\boldsymbol{l}\mathbf{H}$, which can be performed in $\mathcal{O}(n\log n)$ complexity and $\mathcal{O}(\log n)$ latency via the FHT algorithm (see \Lref{fht}). Since $\boldsymbol{l}_{\rm WH}$ contains half of the $2n$ inner-products in \eqref{map}, and the other half are just the elements of $-\boldsymbol{l}_{\rm WH}$, the FHT version of the ML decoder for first-order RM codes can be obtained as
\begin{align}\label{map_FHT}
\mathbf{c}^{*}=\frac{1}{2}[1-{\rm sign}(\boldsymbol{l}_{\rm WH}(i^*))\mathbf{h}_{i^*}]~~\text{s.t.}~~i^*=\operatorname*{argmax}_{i=1,2,\cdots n}~\!\!|\boldsymbol{l}_{\rm WH}(i)|,
\end{align}
where $\boldsymbol{l}_{\rm WH}(i)$ is the $i$-th element of the vector $\boldsymbol{l}_{\rm WH}$, and $\mathbf{h}_{i}$ is the $i$-th column of the matrix $\mathbf{H}$.

It will be shown in Section \ref{numerics} that soft decoding of the RM product codes results in a much better performance than their hard decoding. To enable a SISO decoder for RM product codes under consideration, we derive the soft version of the FHT algorithm, referred to as soft-FHT in this paper, for first-order RM code components. We do this in two steps, i.e., first calculating the LLRs of the information bits and then calculating the LLRs of the encoded bits, which will be discussed in the following.

For the AWGN channel model $\mathbf{y}=\tilde{\mathbf{c}}+\mathbf{n}$ and any $(n,k)$ binary linear code $\mathcal{C}$, the LLR $\boldsymbol{l}_{\rm inf}(i)$ of each $i$-th information bit $u_i$, $i=1,2,\cdots,k$, can be obtained form the channel LLRs vector $\boldsymbol{l}$, using the \textit{max-log} approximation, as \cite{jamali2021Reed}
\begin{align}\label{llrinf}
\boldsymbol{l}_{\rm inf}(i) \approx \operatorname*{max}_{\mathbf{c}\in\mathcal{C}_i^0}~\langle \boldsymbol{l}, 1-2{\mathbf{c}}\rangle~-~ \operatorname*{max}_{\mathbf{c}\in\mathcal{C}_i^1}~\langle \boldsymbol{l}, 1-2{\mathbf{c}}\rangle,
\end{align}
where $\mathcal{C}_i^0$ and $\mathcal{C}_i^1$ denote the subsets of codewords that have $u_i=0$ and $u_i=1$, respectively. In the particular case of order-$1$ codes, one can compute $\boldsymbol{l}_{\rm inf}$ more efficiently by invoking the FHT algorithm. 

The generator matrix $\mathbf{G}_{k\times n}$ of a first-order RM code has one row of Hamming weight $n$ and $m$ rows of weight $n/2$. Assuming that the first row is the all-one row, the calculation of $\boldsymbol{l}_{\rm inf}$ for $u_1$ should be carried out differently from the other $u_i$'s. 
Let $\mathbf{U}_{2^k\times k}$ be a matrix listing all binary vectors of length $k$ as the rows such that the $j$-th row, $j=1,2,\cdots 2^k$, is the binary representation of the number $j-1$ in $k$ bits with the most significant bit being at the left. The matrix multiplication $\mathbf{C}_{2^k\times n}:=\mathbf{U}\mathbf{G}$ (over the binary field) then lists all the codewords in a way that the upper half (the first $n$ rows) of $\tilde{\mathbf{C}}:=1-2\mathbf{C}$ is equal to $\mathbf{H}$ and the lower half is equal to $-\mathbf{H}$. Therefore, given that $u_1$ is equal to zero for the first half of the codewords and equal to one for the second half, we have using \eqref{llrinf}
\begin{align}\label{llrinf1}
\boldsymbol{l}_{\rm inf}(1) \approx \operatorname*{max}_{i'=1,2,\cdots n}~\boldsymbol{l}_{\rm WH}(i')~-~ \operatorname*{max}_{i'=1,2,\cdots n}~-\boldsymbol{l}_{\rm WH}(i').
\end{align}

To compute the LLRs $\boldsymbol{l}_{\rm inf}(i)$ for $i=2,\cdots k$, we only need to find the set of indices of the first half of the codewords that have $u_i=0$ and $u_i=1$, denoted by the sets $\mathcal{I}_{0,i}\subset\{1,2,\cdots n\}$ and $\mathcal{I}_{1,i}\subset\{1,2,\cdots n\}$, respectively\footnote{Note that these sets of indices are fixed across the decoding and can be computed before hand to reduce the decoding complexity and latency.}. In fact, for any codeword in the first half that has $u_i=0$ or $u_i=1$, we have exactly the negative of that codeword in the second half, corresponding to the same realization of the bits $(u_1,u_2,\cdots, u_k)$ but with $u_1=1$ instead of $u_1=0$ (recall that the first row of $\mathbf{G}$ is all-one). Therefore, using \eqref{llrinf}, we have
\begin{align}\label{llrinfk}
\boldsymbol{l}_{\rm inf}(i\neq1) &~\approx \operatorname*{max}_{i'\in\mathcal{I}_{0,i}}~\pm\boldsymbol{l}_{\rm WH}(i')~-~ \operatorname*{max}_{i'\in\mathcal{I}_{1,i}}~\pm\boldsymbol{l}_{\rm WH}(i')\nonumber\\
&~= \operatorname*{max}_{i'\in\mathcal{I}_{0,i}}~|\boldsymbol{l}_{\rm WH}(i')|~-~ \operatorname*{max}_{i'\in\mathcal{I}_{1,i}}~|\boldsymbol{l}_{\rm WH}(i')|.
\end{align}

\begin{algorithm}[t]
	\caption{Soft-FHT Algorithm for $\mathcal{RM}(m,1)$ Codes} \label{alg_softfht}
	\textbf{Input:} The channel LLR vector $\boldsymbol{l}$; RM code parameter $m$,  the sets of indices $\mathcal{I}_{0,i}$ and $\mathcal{I}_{1,i}$ for each $i$-th bit, $i=2,\cdots m+1$\\
	\textbf{Output:} Soft decisions (i.e., the updated LLR vector) $\hat{\boldsymbol{l}}$
	\vspace*{0.05in}
	\begin{algorithmic}[1]
		\State $\boldsymbol{l}_{\rm WH}\gets\boldsymbol{l}\mathbf{H}$ \Comment apply FHT algorithm to $\boldsymbol{l}$
		\State  Initialize $\boldsymbol{l}_{\rm inf}$ as an all-zero vector of length $m+1$
		\State $\boldsymbol{l}_{\rm inf}(1) \gets$ Eq. \eqref{llrinf1} \Comment calculate $\boldsymbol{l}_{\rm inf}(1)$ using \eqref{llrinf1}
		\For {$i=2,\cdots,m+1$} 
		\State $\boldsymbol{l}_{\rm inf}(i) \gets$ Eq. \eqref{llrinfk} \Comment calculate $\boldsymbol{l}_{\rm inf}(i)$ using \eqref{llrinfk}
		\EndFor 
		\State  Initialize $\boldsymbol{l}_{\rm enc}$ as an all-zero vector of length $n:=2^m$
		\State $\mathbf{R}\gets \texttt{repeat}(\boldsymbol{l}_{\rm inf}^T,1,n)$ \Comment  concatenate $n$ copies of $\boldsymbol{l}_{\rm inf}^T$
		\State $\mathbf{V}\gets \mathbf{R}\odot \mathbf{G}$ \Comment element-wise matrix multiplication
		\For {$j=1,2,\cdots,n$}
		\State $\mathbf{v}\gets$ nonzero elements in the $j$-th column of $\mathbf{V}$
		\State $\boldsymbol{l}_{\rm enc}(j)\!\gets\!\prod_{j'}{\rm sign}(\mathbf{v}(j')) \!\times\! \operatorname*{min}_{j'} \!|\mathbf{v}(j')|$ \Comment using \eqref{llr_enc}
		\EndFor 
		\State $\hat{\boldsymbol{l}}\gets\boldsymbol{l}_{\rm enc}$
		\State \textbf{return} $\hat{\boldsymbol{l}}$
	\end{algorithmic}
\end{algorithm}

Once we have the LLRs of the information bits, we can use them to calculate the LLRs of the encoded bits, denoted by $\boldsymbol{l}_{\rm enc}$. Note that the $j$-th encoded bit $c_j$, $j=1,\cdots,n$, is obtained using the $j$-th column of $\mathbf{G}$ as $c_j=\sum_{i=1}^{m+1}u_ig_{i,j}$. Therefore, the LLR $\boldsymbol{l}_{\rm enc}(j)$ of the $j$-th encoded bit can be obtained using the well-known \textit{min-sum} approximation as
\begin{align}\label{llr_enc}
\boldsymbol{l}_{\rm enc}(j)=\prod_{i\in \Lambda_{j}}{\rm sign}(\boldsymbol{l}_{\rm inf}(i)) \times \operatorname*{min}_{i\in \Lambda_{j}} |\boldsymbol{l}_{\rm inf}(i)|,
\end{align}
where $\Lambda_{j}$ is the set of indices corresponding to the nonzero elements in the $j$-th column of $\mathbf{G}$. The soft-FHT algorithm is summarized in Algorithm \ref{alg_softfht}.

\subsection{Complexity and Latency Analysis}\label{complexity}
The following two lemmas establish sufficient conditions for decoding \textit{any} $Q$-dimensional product code with an $\mathcal{O}(n\log n)$ complexity and an $\mathcal{O}(\log n)$ latency.
\begin{lemma}\label{low-comp}
	Any $Q$-dimensional product code can be decoded with an $\mathcal{O}(n\log n)$ complexity if the component codes can be decoded with an $\mathcal{O}(n\log n)$ complexity.
\end{lemma}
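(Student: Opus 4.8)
The plan is to analyze the decoding complexity recursively over the $Q$ dimensions, bounding the total work done in one pass of Algorithm~\ref{alg_dec} and then accounting for the constant number $I$ of iterations. First I would fix notation: let $n_q$ denote the blocklength of the $q$-th component code $\mathcal{C}_q$, so that the product code has blocklength $n = n_t = \prod_{q=1}^{Q} n_q$, and let $f(n_q)$ be the per-codeword decoding complexity of $\mathcal{D}_q$, which by hypothesis satisfies $f(n_q) = \mathcal{O}(n_q \log n_q)$. The key observation is that when $\mathcal{D}_q$ is applied along the $q$-th dimension of the $Q$-dimensional LLR array (line~5 of Algorithm~\ref{alg_dec}), it is invoked once for each of the $n / n_q$ vectors lying along that dimension. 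Hence the cost of processing dimension $q$ in a single iteration is $(n/n_q)\cdot f(n_q) = \mathcal{O}\bigl((n/n_q)\cdot n_q \log n_q\bigr) = \mathcal{O}(n \log n_q)$.

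Next I would sum over all $Q$ dimensions and over the $I$ iterations. The total complexity of one iteration is $\sum_{q=1}^{Q} \mathcal{O}(n \log n_q) = \mathcal{O}\bigl(n \sum_{q=1}^{Q}\log n_q\bigr) = \mathcal{O}(n \log n)$, where the last equality uses $\sum_{q=1}^{Q}\log n_q = \log\bigl(\prod_{q=1}^{Q} n_q\bigr) = \log n$. Multiplying by the fixed number of iterations $I$ (a constant independent of $n$) preserves the bound. The reshaping steps in lines~2 and~8 touch each of the $n$ coordinates a constant number of times, contributing only $\mathcal{O}(n)$, and lines~1 and~9 are likewise $\mathcal{O}(n)$. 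Therefore the overall decoding complexity is $\mathcal{O}(n \log n)$, as claimed.

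The step I expect to require the most care is the bookkeeping that $\mathcal{D}_q$ is called exactly $n/n_q$ times per dimension per iteration and that these calls are on disjoint length-$n_q$ slices, so that their costs add without overlap; this is where the Kronecker/product structure of the code is really used, and one should make sure the argument does not secretly assume the component codes are identical or that $Q$ is constant. In fact $Q$ need not be bounded: since $\log n = \sum_q \log n_q \ge Q$ when each $n_q \ge 2$, the bound $\mathcal{O}(n\log n)$ absorbs the dependence on $Q$ automatically. A minor subtlety is that if some component decoder is SISO and itself runs internal iterations, those are folded into the constant hidden in $f(n_q) = \mathcal{O}(n_q\log n_q)$; for the first-order RM components considered in this paper, $\mathcal{D}_q$ is the soft-FHT algorithm of Section~\ref{soft-fht}, whose complexity is dominated by the FHT step (see \Lref{fht}) and is indeed $\mathcal{O}(n_q \log n_q)$, so the hypothesis of the lemma is met.
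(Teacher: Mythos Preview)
Your proposal is correct and follows essentially the same argument as the paper: count $n/n_q$ calls to $\mathcal{D}_q$ per dimension, use $f(n_q)=\mathcal{O}(n_q\log n_q)$ to get $\mathcal{O}(n\log n_q)$ per dimension, sum the logs via $\sum_q \log n_q = \log n$, and absorb the constant $I$. Your additional remarks on reshaping cost and on $Q$ being automatically absorbed are sound refinements but not required.
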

\begin{proof}
	Let $\mathcal{N}(n_q,k_q)$ denote the decoding complexity of the $q$-th decoder, $q=1,2,\cdots,Q$. At each iteration, the decoder needs to perform $n_t/n_q$ decodings over length-$n_q$ vectors, each incurring an $\mathcal{N}(n_q,k_q)$ complexity. Given that there are $Q$ decoders at each iteration, the overall decoding complexity $\mathcal{N}_t$ will be
	\begin{align}\label{com_tot}
	\mathcal{N}_t&=I\sum_{q=1}^{Q}\frac{n_t}{n_q}\mathcal{N}(n_q,k_q)\nonumber\\
	&\overset{(a)}{=}In_t\sum_{q=1}^{Q}\mathcal{O}(\log n_q)\nonumber\\
	&\overset{(b)}{=}In_t\mathcal{O}(\log n_t),
	\end{align}
	where step $(a)$ is by the assumption that the $q$-th decoder requires $\mathcal{N}(n_q,k_q)=\mathcal{O}(n_q\log n_q)$ complexity, and step $(b)$ follows by $\sum_{q=1}^{Q}\log n_q=\log \prod_{q=1}^{Q}n_q=\log n_t$. As we numerically verify in Section \ref{numerics}, $I$ is a small number (usually less than $5$) and does not impact the complexity and latency.
\end{proof}
\begin{lemma}\label{low-lat}
	Any $Q$-dimensional product code can be decoded with an $\mathcal{O}(\log n)$ latency if the component codes can be decoded with an $\mathcal{O}(\log n)$ latency.
\end{lemma}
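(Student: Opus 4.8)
The plan is to mirror the argument used for Lemma \ref{low-comp}, but to exploit the key structural fact that the $n_t/n_q$ decodings performed along the $q$-th dimension at a given stage act on \emph{disjoint} length-$n_q$ subvectors of the array $\mathbf{L}$. Because these subvectors do not share any coordinates, the $n_t/n_q$ invocations of $\mathcal{D}_q$ can be executed fully in parallel, and hence contribute only a single additive $\mathcal{O}(\log n_q)$ term to the overall latency rather than a multiplicative factor of $n_t/n_q$.

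First I would let $\mathcal{T}(n_q,k_q)$ denote the latency of the $q$-th component decoder $\mathcal{D}_q$, and invoke the hypothesis $\mathcal{T}(n_q,k_q)=\mathcal{O}(\log n_q)$. Within one pass of the outer loop of Algorithm \ref{alg_dec}, the $Q$ dimensions must be processed sequentially, since the decoding over dimension $q$ consumes the output of dimension $q-1$; therefore one iteration has latency $\sum_{q=1}^{Q}\mathcal{T}(n_q,k_q)$. Running $I$ iterations multiplies this by $I$, yielding a total decoder latency
\begin{align}
\mathcal{T}_t = I\sum_{q=1}^{Q}\mathcal{T}(n_q,k_q) = I\sum_{q=1}^{Q}\mathcal{O}(\log n_q) = I\,\mathcal{O}(\log n_t),
\end{align}
where the last step uses $\sum_{q=1}^{Q}\log n_q=\log\prod_{q=1}^{Q}n_q=\log n_t$, exactly as in the proof of Lemma \ref{low-comp}. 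Since $I$ is a small constant independent of $n_t$, as verified numerically in Section \ref{numerics}, this is $\mathcal{O}(\log n_t)$.

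The remaining point to settle is the cost of the auxiliary operations: the element-wise maps in lines~1 and~9 of Algorithm \ref{alg_dec} (computing $\boldsymbol{l}=2\mathbf{y}/\sigma^2$ and $\hat{\mathbf{c}}=0.5(1-{\rm sign}(\hat{\boldsymbol{l}}))$) have $\mathcal{O}(1)$ latency under $n_t$-fold parallelism, while the reshapings in lines~2 and~8, and the change of the ``active'' dimension between consecutive $q$'s, are fixed index permutations determined once and for all by the product-encoder architecture; with a standard parallel read/write they incur at most $\mathcal{O}(\log n_t)$ latency (and $\mathcal{O}(1)$ in an addressing model). These contribute only an additive $\mathcal{O}(\log n_t)$ term that is absorbed into the bound above. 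I expect the main obstacle to be making the parallelism claim fully rigorous — namely, justifying that the $n_t/n_q$ decodings along a dimension genuinely run concurrently on independent resources and that the interleaving/reshaping between dimensions does not serialize the pipeline — rather than any nontrivial calculation.
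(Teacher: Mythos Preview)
Your proposal is correct and follows essentially the same approach as the paper: exploit that the $n_t/n_q$ decodings along each dimension act on disjoint subvectors and can run in parallel, then sum the per-dimension latencies over $q$ and multiply by $I$ to obtain $I\sum_{q=1}^{Q}\mathcal{O}(\log n_q)=I\,\mathcal{O}(\log n_t)$. The paper's proof is a single sentence to this effect; your additional discussion of the auxiliary reshaping and element-wise operations is sound but goes beyond what the paper itself addresses.
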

\begin{proof}
	Given that all $n_t/n_q$ decodings at each $q$-th dimension can be executed in parallel, the overall latency is $I\sum_{q=1}^{Q}\mathcal{O}(\log n_q)=I\mathcal{O}(\log n_t)$.
\end{proof}
\begin{lemma}\label{fht}
	Besides having an $\mathcal{O}(n\log n)$ complexity, the FHT algorithm performs the ML decoding in $\mathcal{O}(\log n)$ latency for first-order RM codes of blocklength $n$. 
\end{lemma}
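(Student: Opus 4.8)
The plan is to verify three things: that the fast Hadamard transform of a length-$n$ vector costs $\mathcal{O}(n\log n)$ arithmetic operations and, crucially, only $\mathcal{O}(\log n)$ latency; that the decision and reconstruction steps in \eqref{map_FHT} add only $\mathcal{O}(n)$ operations and $\mathcal{O}(\log n)$ latency; and that the resulting codeword is exactly the ML codeword. The last point is essentially already argued around \eqref{map}--\eqref{map_FHT}: in $\pm1$ representation the $2n$ codewords of $\mathcal{RM}(m,1)$ are precisely the columns of $\mathbf{H}$ and of $-\mathbf{H}$, so maximizing $\langle\boldsymbol{l},1-2\mathbf{c}\rangle$ over $\mathbf{c}\in\mathcal{C}$ is the same as selecting the index $i^*$ and sign that maximize $\mathrm{sign}(\cdot)\,\boldsymbol{l}_{\rm WH}(i)$, which is exactly \eqref{map_FHT}.

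For the transform itself I would use the Kronecker factorization $\mathbf{H}_{2^m}=\mathbf{H}_2^{\otimes m}$ with $\mathbf{H}_2=\begin{bmatrix}1&1\\1&-1\end{bmatrix}$, which lets one write the map $\boldsymbol{l}\mapsto\boldsymbol{l}\mathbf{H}_{2^m}$ as a composition of $m=\log_2 n$ identical ``butterfly'' stages. Each stage splits the current length-$n$ array into $n/2$ disjoint coordinate pairs $(a,b)$ and replaces each by $(a+b,\,a-b)$; this is $n$ additions/subtractions per stage and $n\log_2 n=\mathcal{O}(n\log n)$ in total, recovering the complexity bound. Because the $n/2$ pairwise updates inside a stage act on disjoint coordinates, they execute in parallel, so each stage has latency $\mathcal{O}(1)$ and the whole transform has latency $\mathcal{O}(\log n)$. (This is the standard fast Walsh--Hadamard transform, i.e., the radix-$2$ FFT butterfly with all twiddle factors equal to one.)

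Finally I would account for the post-transform steps of \eqref{map_FHT}: forming $|\boldsymbol{l}_{\rm WH}(i)|$ for all $i$ is $n$ independent scalar operations ($\mathcal{O}(n)$ work, $\mathcal{O}(1)$ latency); computing $i^*=\operatorname*{argmax}_i|\boldsymbol{l}_{\rm WH}(i)|$ is a reduction over $n$ numbers, which costs $\mathcal{O}(n)$ comparisons and, arranged as a balanced binary tree, $\mathcal{O}(\log n)$ latency; and reconstructing $\mathbf{c}^*=\tfrac12[\one-\mathrm{sign}(\boldsymbol{l}_{\rm WH}(i^*))\mathbf{h}_{i^*}]$ reads one Hadamard column and performs $n$ affine/sign operations ($\mathcal{O}(n)$ work, $\mathcal{O}(1)$ latency). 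Summing these with the transform cost leaves the totals at $\mathcal{O}(n\log n)$ complexity and $\mathcal{O}(\log n)$ latency, and the equivalence of \eqref{map_FHT} to \eqref{map} certifies that the output is the ML codeword.

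I do not anticipate a genuine obstacle --- the statement is classical --- so the only care needed is bookkeeping: being explicit that the butterflies within a stage are coordinate-disjoint (hence parallel, giving $\mathcal{O}(1)$ per-stage latency rather than $\mathcal{O}(n)$), and making sure the $\operatorname*{argmax}$ reduction and the codeword read-off are each charged their $\mathcal{O}(\log n)$ latency and $\mathcal{O}(n)$ work rather than being quietly dropped.
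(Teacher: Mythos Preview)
Your proposal is correct and follows essentially the same approach as the paper: the paper factors $\mathbf{H}$ as a product of $m$ sparse matrices with two nonzeros per column and argues each of the $m$ serial stages costs $\mathcal{O}(2^m)$ work and $\mathcal{O}(1)$ latency, which is precisely your Kronecker/butterfly description. If anything, you are more thorough, since you also account for the $\operatorname*{argmax}$ reduction and codeword reconstruction in \eqref{map_FHT}, which the paper's proof leaves implicit.
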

\begin{proof}
The core idea behind the implementation of the FHT algorithm is that the $2^m\times 2^m$ matrix $\mathbf{H}$ can be written as the product of $m$ matrices of size $2^m\times 2^m$, say $\mathbf{M}_1,\mathbf{M}_2,\cdots,\mathbf{M}_m$, each having only two non-zero elements per column \cite[page 421]{macwilliams1977theory}. Therefore, 
\begin{align}
\boldsymbol{l}_{\rm WH}:=\boldsymbol{l}\mathbf{H}=\boldsymbol{l}\mathbf{M}_1\mathbf{M}_2\cdots\mathbf{M}_m
\end{align}
boils down to $m$ matrix multiplications of the form $\mathbf{f}_{s}:=\mathbf{f}_{s-1}\mathbf{M}_s$, $s=1,2,\cdots,m$, with $\mathbf{f}_{0}:=\boldsymbol{l}$. Given that each matrix $\mathbf{M}_s$ has two non-zero elements per column, we only need a single addition/subtraction to compute each of $2^m$ elements of each vector $\mathbf{f}_{s}$. Therefore, each $\mathbf{f}_{s}$ can be computed with $\mathcal{O}(2^m)$ complexity and $\mathcal{O}(1)$ latency (when computing all $2^m$ elements of $\mathbf{f}_{s}$ in parallel). Finally, since each of $m$ vectors $\mathbf{f}_{s}$'s should be computed serially, to get $\boldsymbol{l}_{\rm WH}$, we need $\mathcal{O}(m2^m)$ complexity and $\mathcal{O}(m)$ latency in total.
\end{proof}
\begin{theorem}\label{rm_comp_lat}
	Any RM subcode that is obtained as the product of order-$1$ RM codes can be decoded in $\mathcal{O}(n\log n)$ complexity and $\mathcal{O}(\log n)$ latency via soft-FHT algorithm over component codes.
\end{theorem}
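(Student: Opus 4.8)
The plan is to combine the three lemmas already established in this section, so the proof is essentially a bookkeeping argument. First I would invoke \Lref{low-comp} and \Lref{low-lat}, which reduce the claim for the $Q$-dimensional product code to verifying that the component decoder---in our case the soft-FHT decoder of Algorithm~\ref{alg_softfht} applied to an $\mathcal{RM}(m_q,1)$ component of blocklength $n_q=2^{m_q}$---runs in $\mathcal{O}(n_q\log n_q)$ complexity and $\mathcal{O}(\log n_q)$ latency. So the real content is a line-by-line complexity/latency audit of Algorithm~\ref{alg_softfht}.

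Concretely, I would walk through the soft-FHT algorithm step by step. Line~1 is a single FHT, which by \Lref{fht} costs $\mathcal{O}(n_q\log n_q)$ complexity and $\mathcal{O}(\log n_q)$ latency. Computing $\boldsymbol{l}_{\rm inf}(1)$ via \eqref{llrinf1} is two maxima over $n_q$ entries, i.e.\ $\mathcal{O}(n_q)$ work and $\mathcal{O}(\log n_q)$ latency (a parallel reduction tree). The loop over $i=2,\dots,m_q+1$ computes each $\boldsymbol{l}_{\rm inf}(i)$ via \eqref{llrinfk}; since the index sets $\mathcal{I}_{0,i},\mathcal{I}_{1,i}$ partition (a subset of) $\{1,\dots,n_q\}$ and are precomputed, the total work across all $i$ is $\mathcal{O}(n_q)$ and, running the $m_q$ coordinates in parallel with reduction trees, the latency is $\mathcal{O}(\log n_q)$. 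The construction of $\mathbf{R}$, the element-wise product $\mathbf{V}=\mathbf{R}\odot\mathbf{G}$, and the final loop computing $\boldsymbol{l}_{\rm enc}(j)$ via \eqref{llr_enc}: here each column of $\mathbf{G}$ has at most $m_q+1$ nonzero entries, so each $\boldsymbol{l}_{\rm enc}(j)$ is a sign-product and a minimum over $\mathcal{O}(\log n_q)$ values, giving $\mathcal{O}(\log n_q)$ work per $j$ and $\mathcal{O}(n_q\log n_q)$ over all $n_q$ columns, with $\mathcal{O}(\log\log n_q)=\mathcal{O}(\log n_q)$ latency per column when parallelized. Summing, the component decoder is $\mathcal{O}(n_q\log n_q)$ and $\mathcal{O}(\log n_q)$, as required.

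Finally I would feed this back into \Lref{low-comp} and \Lref{low-lat}: the total complexity is $I\sum_{q=1}^{Q}\frac{n_t}{n_q}\mathcal{O}(n_q\log n_q)=I\,n_t\sum_q\mathcal{O}(\log n_q)=I\,n_t\,\mathcal{O}(\log n_t)=\mathcal{O}(n\log n)$, using $\sum_q\log n_q=\log\prod_q n_q=\log n_t$ and that $I$ is a small constant; likewise the latency is $I\sum_q\mathcal{O}(\log n_q)=\mathcal{O}(\log n_t)=\mathcal{O}(\log n)$. One should also note that $Q=\sum_q 1\le\sum_q m_q=m_t=\log n$, so the number of dimensions does not hurt the bound even implicitly.

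I expect the only genuinely delicate point to be the cost accounting inside Algorithm~\ref{alg_softfht}, specifically making explicit that (i) the index sets $\mathcal{I}_{0,i},\mathcal{I}_{1,i}$ are fixed and precomputed (as remarked in the footnote after \eqref{llrinfk}), so they do not enter the per-decode complexity, and (ii) every column of the first-order generator matrix $\mathbf{G}$ has Hamming weight at most $m_q+1=\mathcal{O}(\log n_q)$, which is what keeps the encoded-bit LLR computation in \eqref{llr_enc} cheap; both facts are immediate from the structure of $\mathbf{G}$ described in Section~\ref{RM_rev}. Beyond this, the argument is just the arithmetic of $\mathcal{O}(\cdot)$ already carried out in the proof of \Lref{low-comp}, so no new obstacle arises.
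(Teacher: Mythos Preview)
Your approach is correct and matches the paper's: invoke \Lref{low-comp} and \Lref{low-lat}, then verify that the component soft-FHT decoder runs in $\mathcal{O}(n_q\log n_q)$ complexity and $\mathcal{O}(\log n_q)$ latency. The paper's own proof is a one-liner that simply asserts the soft-FHT cost by analogy with \Lref{fht}, so your line-by-line audit of Algorithm~\ref{alg_softfht} actually supplies detail the paper omits (one minor slip: the total work for the $\boldsymbol{l}_{\rm inf}(i\neq 1)$ loop is $\mathcal{O}(m_q n_q)=\mathcal{O}(n_q\log n_q)$, not $\mathcal{O}(n_q)$, since each of the $m_q$ coordinates scans $n_q/2$ entries---but this is still within budget).
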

\begin{proof}
	This follows immediately from Lemmas \ref{low-comp} and \ref{low-lat}, and noting that the proposed soft-FHT algorithm, similar to the FHT algorithm, requires $\mathcal{O}(n\log n)$ complexity and $\mathcal{O}(\log n)$ latency to decode order-$1$ RM codes.
\end{proof}

\begin{theorem}\label{enc_comp_lat}
The proposed coding scheme has the encoding complexity of $\mathcal{O}(n\log n)$ and encoding latency of $\mathcal{O}(\log n)$.
\end{theorem}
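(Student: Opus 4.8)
The plan is to mirror the structure of the complexity/latency argument for decoding (Lemmas~\ref{low-comp} and~\ref{low-lat}) on the encoding side, reducing the claim to two ingredients: (i) the cost of a single order-$1$ RM encoding, and (ii) the bookkeeping of how many such encodings are performed across all $Q$ dimensions. First I would recall that encoding an $\mathcal{RM}(m_q,1)$ code amounts to the matrix multiplication $\mathbf{u}\mathbf{G}^{(q)}$, and since $\mathbf{G}^{(q)}$ is (a row-selection of) the Hadamard-type matrix $\mathbf{P}_{n_q\times n_q}$ from~\eqref{Pnn}, the same butterfly factorization used in \Lref{fht} applies: the encoding of one length-$n_q$ codeword costs $\mathcal{O}(n_q\log n_q)$ operations and $\mathcal{O}(\log n_q)$ latency. (Equivalently, one may note that $\mathbf{G}^{(q)}$ has $m_q+1 = \mathcal{O}(\log n_q)$ rows, so the naive product is already $\mathcal{O}(n_q\log n_q)$, but the butterfly form is what gives the $\mathcal{O}(\log n_q)$ latency.)

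Next I would carry out the aggregation exactly as in~\eqref{com_tot}: encoding the $Q$-dimensional product code consists, for each dimension $q$, of performing $n_t/n_q$ independent order-$1$ encodings along the length-$n_q$ fibers of the array, so the total encoding work is
\begin{align}
\sum_{q=1}^{Q}\frac{n_t}{n_q}\,\mathcal{O}(n_q\log n_q)
= n_t\sum_{q=1}^{Q}\mathcal{O}(\log n_q)
= n_t\,\mathcal{O}(\log n_t),
\end{align}
using $\sum_{q=1}^{Q}\log n_q=\log\prod_{q=1}^{Q}n_q=\log n_t$ with $n:=n_t$. For latency, all $n_t/n_q$ encodings in a given dimension are independent and run in parallel, and the $Q$ dimensions are processed in sequence, so the total encoding latency is $\sum_{q=1}^{Q}\mathcal{O}(\log n_q)=\mathcal{O}(\log n_t)$, since $Q$ is a fixed constant. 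This yields the claimed $\mathcal{O}(n\log n)$ complexity and $\mathcal{O}(\log n)$ latency.

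The only real subtlety — and the step I would be most careful about — is justifying that the per-dimension encoding of the full array really decomposes into $n_t/n_q$ independent length-$n_q$ encodings, i.e.\ that the Kronecker structure of $\mathbf{G}=\mathbf{G}^{(1)}\otimes\cdots\otimes\mathbf{G}^{(Q)}$ in~\eqref{G} is faithfully realized by the "encode along each axis separately" procedure of Section~\ref{PCs} (this is the standard mixed-product property of the Kronecker product, but it is the crux of why the cost is a sum rather than a product over $q$). Once that is in place the rest is the same accounting as in Lemmas~\ref{low-comp} and~\ref{low-lat}, so I would state it briefly and refer back to those proofs rather than repeat the computation.
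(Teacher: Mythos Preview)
Your proposal is correct and follows essentially the same two-step structure as the paper: bound the per-component encoding cost for $\mathcal{RM}(m_q,1)$, then aggregate across the $Q$ dimensions exactly as in Lemmas~\ref{low-comp} and~\ref{low-lat}. The only difference is that the paper's per-component argument is more elementary than yours: it simply notes that naive encoding $\mathbf{c}=\mathbf{u}\mathbf{G}$ costs $\mathcal{O}(kn)$ operations and $\mathcal{O}(k)$ latency, and since $k=m_q+1=\mathcal{O}(\log n_q)$ for order-$1$ codes, this already yields both the $\mathcal{O}(n_q\log n_q)$ complexity \emph{and} the $\mathcal{O}(\log n_q)$ latency---no butterfly factorization is needed, contrary to your parenthetical remark.
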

\begin{proof}
	Note, based on the general encoding procedure of binary linear codes $\mathbf{c}=\mathbf{u}\mathbf{G}$, that the encoding complexity and latency are $\mathcal{O}(kn)$ and $\mathcal{O}(k)$, respectively. For order-$1$ RM code components we have $k=m+1=1+\log n$, which results in the encoding complexity and latency of $\mathcal{O}(n\log n)$ and $\mathcal{O}(\log n)$, respectively, for the code components. Following similar procedures to Lemmas \ref{low-comp} and \ref{low-lat}, one can show that the overall encoding complexity and latency of any $Q$-dimensional product code are also $\mathcal{O}(n\log n)$ and $\mathcal{O}(\log n)$, respectively, if the underlying code components have that encoding complexity and latency. 
\end{proof}
\section{Numerical Results}\label{numerics}
In this section, we present extensive numerical results to study the performance of the proposed coding scheme in various aspects, while focusing on 2D product codes. We first verify the accuracy of the soft-FHT decoder in Fig. \ref{fig2}. As seen, all decoders, namely FHT, soft-FHT, MAP, and soft-MAP \cite{jamali2021Reed}, match for order-$1$ RM codes. Fig. \ref{fig2} also shows the impact of the number of iterations $I$ on the performance of a sample product code, i.e.,  $\mathcal{RM}(6,1)\otimes\mathcal{RM}(2,1)$. It is observed that not many iterations are required for our proposed decoder.
\begin{figure}[t]
	\centering
	\includegraphics[trim=0.5cm 0.5cm 0.4cm 0.8cm,width=3.6in]{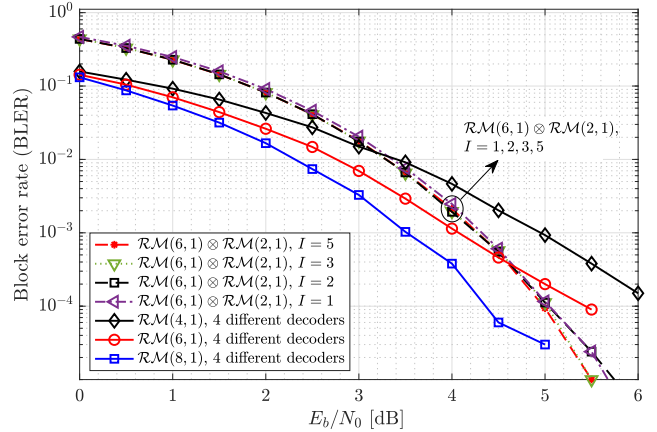}
	\caption{Accordance of the performance of $4$ different decoders, namely FHT, soft-FHT, MAP, and soft-MAP \cite{jamali2021Reed}, for first-order RM codes. The impact of the number of iterations $I$ is also illustrated for $\mathcal{RM}(6,1)\otimes\mathcal{RM}(2,1)$.}
	\label{fig2}
\end{figure}

Note that, as shown in Fig. \ref{fig1}, we first do the decoding over $\mathcal{C}_1$ and then over $\mathcal{C}_2$. As such, the decoder $\mathcal{D}_1$ is expected to decode noisier codewords than $\mathcal{D}_2$. Therefore, one needs to use a stronger code (e.g., with a larger blocklength and/or a lower rate) for $\mathcal{C}_1$ compared to $\mathcal{C}_2$. In the context of the product of order-$1$ RM codes, considered here, this is equivalent to having $m_1>m_2$. This is confirmed in Figs. \ref{fig3} and \ref{fig4} for subcodes of $\mathcal{RM}(13,2)$ and $\mathcal{RM}(8,2)$, obtained as the product of $\mathcal{RM}(m_1,1)\otimes\mathcal{RM}(m_2,1)$ such that $m_1+m_2=13$ and $m_1+m_2=8$, respectively. It is observed that the system performance improves\footnote{Note that the channel capacity is approximately linear in ${\rm SNR}$ over low-capacity regimes. Therefore, based on the definition of $E_b/N_0$, it is logical to compare the performance of different low-rate codes
in terms of $E_b/N_0$.} as we increase $m_1-m_2$. 

Fig. \ref{fig4} also compares the performance of hard decoding with soft decoding for various subcodes of $\mathcal{RM}(8,2)$. The results for hard decoding are obtained by applying the FHT algorithm to the component codes to return hard decisions of the noisy codewords over each dimension. The hard decisions $\hat{\mathbf{y}}_i\in\{0,1\}^{n_i}$, $i=1,2$, are then mapped to $1-2\hat{\mathbf{y}}_i$ before feeding the next FHT decoder. As seen, our SISO decoder significantly outperforms hard decoding. Additionally, the same trend is observed for hard decoding as we increase $m_1-m_2$.
\begin{figure}[t]
	\centering
	\includegraphics[trim=0.5cm 0.6cm 0.4cm 0.8cm,width=3.6in]{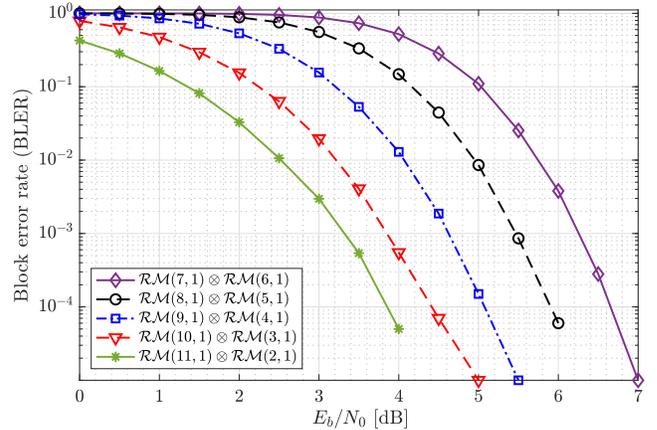}
	\caption{Impact of code component parameters on the performance of various subcodes of $\mathcal{RM}(13,2)$.}
	\label{fig3}
\end{figure}
\begin{figure}[t]
	\centering
	\includegraphics[trim=0.5cm 0.55cm 0.4cm 0.8cm, width=3.6in]{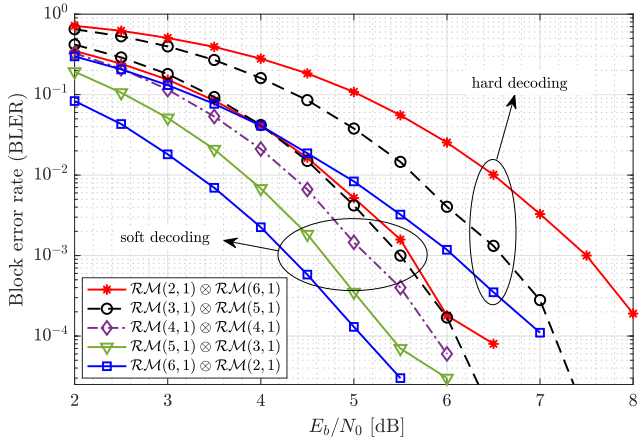}
	\caption{Impact of code component parameters on the performance of various subcodes of $\mathcal{RM}(8,2)$. The comparison between hard decoding and soft decoding is also included.}
	\label{fig4}
\end{figure}

To demonstrates the efficiency of the proposed SISO decoder, we compare its performance with the sub-RPA algorithm \cite{jamali2021Reed}, that achieves close-to-ML performance though with full-projection decoding incurring $\mathcal{O}(n^r\log n)$ complexity for a subcode of $\mathcal{RM}(m,r)$. Fig. \ref{fig5} shows that the full-projection sub-RPA decoding outperforms our low-complexity and low-latency decoder by almost $0.5$ \si{dB} at the BLER of $10^{-3}$, for a subcode of $\mathcal{RM}(8,2)$ obtained as the product of $\mathcal{RM}(6,1)\otimes\mathcal{RM}(2,1)$. However, a more fair comparison is to limit the number of projections in the sub-RPA decoder to a level with a comparable complexity to our SISO decoder. Indeed, the full-projection sub-RPA decoder applies $n-1=255$ projections resulting in $\mathcal{O}(n^2\log n)$ overall complexity. If we apply $5$ random projections for the sub-RPA decoder (we tried $8$ different random selections of $5$ subspaces from $255$ possible subspaces), the performance is then inferior to our SISO decoder by a large margin. Also, the sub-RPA algorithm cannot beat our low-complexity decoder even with $16$ projections (that is still much more complex than our decoder). Our additional simulations with $32$ projections for the sub-RPA decoder show that there are a few ($2$ out of $8$) random trials of the selection of projections that get close to our decoder, while most of the random trials with $64$ projections get slightly better than our decoder.

\begin{figure}[t]
	\centering
	\includegraphics[trim=0.5cm 0.5cm 0.4cm 0.8cm,width=3.6in]{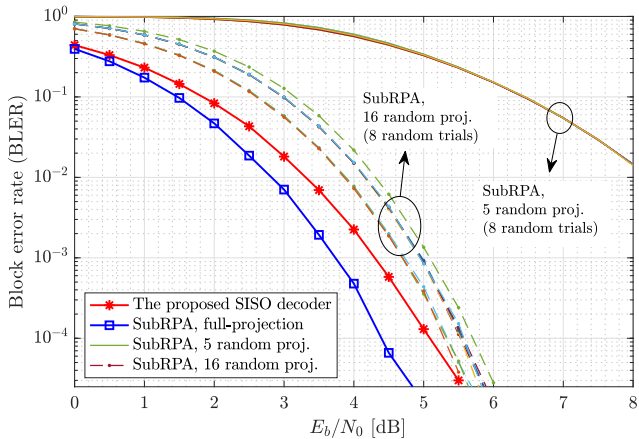}
	\caption{Comparison of the proposed SISO decoder with the sub-RPA algorithm \cite{jamali2021Reed} with full-projection as well as $5$ and $16$ random projections. Product code $\mathcal{RM}(6,1)\otimes\mathcal{RM}(2,1)$ is considered.}
	\label{fig5}
\end{figure}

\begin{figure}[t]
	\centering
	\includegraphics[trim=0.5cm 0.55cm 0.4cm 0.8cm,width=3.6in]{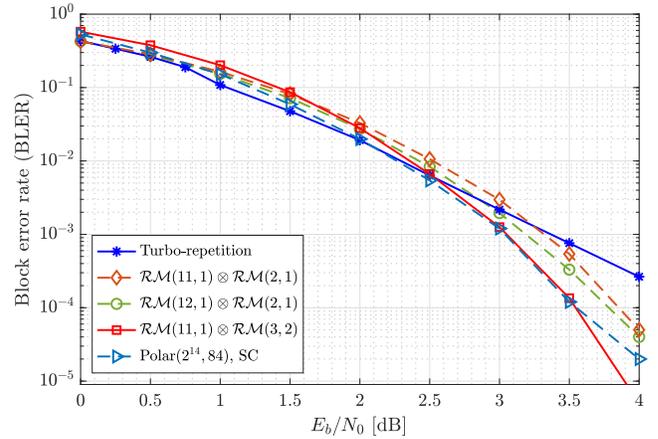}
	\caption{Comparison of the proposed coding scheme with Turbo-repetition and polar under successive cancellation (SC) decoding.}
	\label{fig6}
\end{figure}

Finally, Fig. \ref{fig6} compares the performance of the proposed coding scheme with Turbo-repetition and polar codes. the Turbo-repetition is obtained by repeating a $(120,40)$ Turbo code $68$ times to obtain a $(40,8160)$ code. It is observed that the equivalent RM product codes have sharper slopes and achieve much better performances over moderate to low BLER regimes, thus demonstrating potential applications to URLLC. Fig. \ref{fig6} also shows that it is useful to increase the rate of the second component when the first component is a strong enough code to support such a high rate. For example, $\mathcal{RM}(11,1)\otimes \mathcal{RM}(3,2)$ (via soft-MAP \cite{jamali2021Reed} over $\mathcal{RM}(3,2)$) achieves almost $0.3$ dB gain over $\mathcal{RM}(12,1)\otimes \mathcal{RM}(2,1)$ and $0.9$ dB over Turbo-repetition at the BLER of $10^{-4}$ (note that the performance of Turbo-repetition does not change in $E_b/N_0$ by doubling the number of repetitions as the SNR will increase by the same factor of two that the rate is decreased). Moreover, our $\mathcal{RM}(11,1)\otimes \mathcal{RM}(3,2)$ code achieves the same performance as the equivalent polar code of parameters $(2^{14},84)$, under successive cancellation (SC) decoding, despite its much lower latency. List decoding of the proposed RM product codes to further improve their performance is a subject of future research.

\section{Conclusions}\label{concl}
In this paper, we presented a low-complexity and low-latency coding scheme, based on the product of smaller (particularly, first-order) RM code components, with particular applications to emerging low-capacity scenarios. We proposed an iterative SISO decoder enabled by soft-FHT decoding of code components. It was shown that the proposed coding scheme requires $\mathcal{O}(n\log n)$ complexity and $\mathcal{O}(\log n)$ latency for both  encoding and decoding. Through extensive numerical results, we studied the performance and efficiency of the proposed coding scheme in various aspects. 
Given the recent breakthrough result in \cite{reeves2021reed} proving the capacity-achievability of RM codes over any BMS channel, the design of efficient decoders for RM codes becomes even more substantial than ever. And, based on the fact that any RM code can be written as the union of RM subcodes defined as the product of smaller RM codes \cite{salomon2005augmented}, we believe that the research in this paper opens a new framework toward efficient decoding of RM codes.

\bibliographystyle{IEEEtran}
\bibliography{IEEEabrv}

\end{document}